\documentclass{article}
\usepackage{arxiv}
\usepackage[utf8]{inputenc} 
\usepackage[T1]{fontenc}    
\usepackage{hyperref}       
\usepackage{url}            
\usepackage{booktabs}       
\usepackage{amsfonts,amsmath}
\usepackage{amsfonts}       
\usepackage{amssymb, amsthm}
\usepackage{lmodern}
\usepackage{mathtools}
\usepackage[titletoc]{appendix}
\DeclarePairedDelimiter{\ceil}{\lceil}{\rceil}
\usepackage{float}
\usepackage{color}
\newtheorem{proposition}{Proposition}
\usepackage{physics}
\usepackage{nicefrac}       
\usepackage{microtype}      
\usepackage{lipsum}
\usepackage{graphicx}
\graphicspath{ {./images/} }
\usepackage{authblk}

\title{Optimal upper bound of entropic uncertainty relation for  mutually unbiased bases}

\author[1]{
 Bilal Canturk\thanks{bcanturk@sabanciuniv.edu} \textsuperscript{,}}
  
 \author[1]{Zafer Gedik}

 \affil[1]{ Faculty of Engineering and Natural Sciences, Sabanci
University \authorcr 
Tuzla, Istanbul 34956, Turkey} 

\begin{document}
\maketitle
\begin{abstract}
We have obtained the optimal upper bound of entropic uncertainty relation for $N$ Mutually Unbiased Bases (MUBs). We have used the methods of variational calculus for the states that can be  written in terms of $N$ MUBs. Our result is valid for any state when $N$ is $d+1$, where $d$ is the dimension of the related system. We provide a quantitative criterion for the extendibilty of MUBs. In addition, we have applied our result to the mutual information of $d+1$ observables conditioned with a classical memory.
\end{abstract}

\keywords{Entropic uncertainty relation \and Mutually unbiased bases \and Mutually coherent state \and Extendibility of MUBs}
\pagenumbering{arabic}
\section{Introduction}
One of the fundamental tasks in the quantum information theory is how to extract the complete information of the density matrix of a system. For this purpose, an informationally complete set of measurement elements with rank-1 is performed so that is a maximally efficient measurement. Mutually Unbiased Bases (MUBs) \cite{Wootters1989} provide such a measurement. In addition to their importance in the vein of theoretical aspect \cite{Durt2010}, they have found room in diverse application areas such as quantum error correction \cite{Calderbank1997}, quantum cryptography \cite{Mafu2013}, entanglement detection \cite{Spengler2012} and quantum state tomography \cite{Ivanovic1981,Wootters1989}.
 
\par Uncertainty principle, however, puts a limit on obtaining information content of a quantum system. The observables corresponding to MUBs cannot be determined exactly; the more information about one of such observables is gained, the less information about the others is possible. This trade-off relation was first presented in terms of deviations ($\sigma_{i}$) of the observables by Heisenberg \cite{Heisenberg1927} and later, improved further \cite{Robertson1929,Schrodinger1930}. The expression of uncertainty principle in terms of deviations was formulated either as the product of the deviations or as the sum of them \cite{Mondal2017}. However, as firstly highlighted by Deutsch \cite{deutsch1983}, this formulation of uncertainty principle has some drawbacks; for example, lower bound of the uncertainty principle, $\sigma_{A}\left(\ket{\psi}\right) \sigma_{B}\left(\ket{\psi}\right)\geq \frac{1}{2}\abs{\mel{\psi}{\comm*{A}{B}}{\psi}}$, depends on the initial state, and thus, does not fix such that it can vanish for some choices of $\ket{\psi}$, which do not have to be simultaneous eigenfunctions of the observables $A$ and $B$. In addition, deviation-based uncertainty relations do not capture in general the physical content of the complementary aspect \cite{Dammeier2015}, and the spread of informational content \cite{birula2011}, of the observables. Expressing uncertainty in terms of  entropies of observables was first set forth as a question by Everett \cite{Everett1957}. It was answered affirmatively in Ref.\cite{Hirschman1957} such that the sum of entropies of position and momentum observables satisfies an inequality. This entropic uncertainty relation was proved and improved respectively in Refs.\cite{Beckner1975,Birula1975} for the observables of having a continuous spectrum. The lower bound of the inequality is achieved when the state of the system is a Gaussian wave-packet. The extension of entropic uncertainty relation to the observables in a finite dimensional Hilbert space was first presented in Ref.\cite{deutsch1983}, and improved later in Ref.\cite{Maassen1988}. We wish to highlight the importance of entropic uncertainty relation, which is that it does not have the aforementioned drawbacks of the uncertainty based on deviations.
Entropic uncertainty relation became a fundamental instrument in quantum information theory, especially for entanglement detection \cite{Prevedel2011,Zou2014}. It puts a lower bound on the summation of the entropies of two or more observables when they are measured. Formally, if $A_{n}$ and $A_{m}$ are two observables associated with a quantum system in Hilbert space $H^{d}$, with eigenvectors sets $\{\ket{nk}\}$ and $\{\ket{ml}\}$ respectively, then the summation of their entropies  has a lower bound \cite{Maassen1988},
\begin{equation*}
	H(A_{n})+ H(A_{m}) \geq -\ln(c), c= \max_{k,l}\left(\abs{\bra{nk}\ket{ml}} ^{2}\right),
\end{equation*}    
where $H(A_{n}):=-\sum_{k}p_{nk}\ln(p_{nk})$ is Shannon entropy of the observable $A_{n}$. This inequality was extended to the cases of when the system has some connection with its environment such as quantum memory \cite{Berta2010}. Beside Shannon entropy, other entropies, such as minimum entropy, collision entropy, Tsallis entropy, Rényi entropy, are also used according to their convenience to the relevant problem.  A review of entropic uncertainty relations and their applications can be found in Ref.\cite{Coles2017}. In addition to entropic uncertainty relation, upper bound of entropic uncertainty relation is another important concept which puts an upper bound on the summation of the entropies of two or more observables which, henceforth, we abbreviate as  entropic certainty relation. While entropic uncertainty relation quantifies the lack of information,  entropic certainty relation is related with the correlation between the observables. Entropic certainty relation for the observables set $\{A_{n}\}_{n=1}^{N}$ is defined as  $\sum_{n}H(A_{n})\leq f$. If such an upper bound is found then mutual information of the observables, which measures the correlation between the observables,
\begin{equation*}
\begin{split}
	I(A_{n} : Y):= H(A_{n})-H(A_{n} \mid Y)
\end{split}
\end{equation*} 
can also be bounded, where $Y$ is a classical (or quantum) memory given its access to the observer. In addition, entropic certainty relation can also be used for searching the existence of more than three MUBs especially when the dimension of the system is not a power of a prime number.  The extendibility of MUBs is one of the most important question in quantum information theory. We will return to this point in Sec.\ref{section2}.
\par  We obtain optimal entropic certainty relation of the measurements performed by $N$ MUBs for some density matrix. Our method is based on the variational calculus with some  conditions satisfied by the probability distributions.
\section{Optimal entropic certainty relation for  MUBs} \label{section2}
Two bases $\{\ket{nk}, k=1,2,\ldots,d \}$ and $\{\ket{ml}, l=1,2,\ldots,d\}$ of Hilbert space $H^{d}$, which may be considered as eigenvectors of two observables $A_{n}$ and $A_{m}$ respectively, are called mutually unbiased bases (MUBs) iff $\abs{\bra{nk}\ket{ml}}^{2}=1/d $, 
 for any $k,l$ and $n \neq m$. These observables, $A_{n}$ and $A_{m}$, are known as complementary, or mutually exclusive, observables. If there is $d+1$ MUBs then we reconstruct the density operator $\rho$ of a system by the aid of the outcomes of the measurement of the observables as $\rho=\sum_{n=1,k=1}^{d+1,d}p_{nk}\Pi_{nk}-I$, where $\Pi_{nk}$ is the projection operator onto the eigenspace of the eigenvector $\ket{k}$ of the observable $A_{n}$, and $p_{nk}$ (=tr($\Pi_{nk}\rho)$) is the probability of obtaining the corresponding eigenvalue through measurement. The relation between the elements of two MUBs can be then rewritten as $\tr(\Pi_{nk}\Pi_{ml})=\frac{1+(d\delta_{kl}-1)\delta_{nm}}{d}$. The set of probability distributions $\{p_{nk}, n=1,2,\ldots,N; k=1,2,\ldots,d\}$ of $N$ MUBs obeys the algebraic relation,
\begin{equation}\label{eq01}
    \sum_{n=1,k=1}^{N,d}p_{nk}^{2}\leq \tr(\rho^{2})+1,
\end{equation}
which was obtained in Refs.\cite{Ivanovic1992,Wu2009} independently. $C_{n}:= \sum_{k=1}^{d}p_{nk}^{2}$ is called the purity of the observable $A_{n}$. Hence, the inequality in Eq.(\ref{eq01}) is a restriction on the summation of the purities of $N$ mutually exclusive observables, and the equality is achieved when $N$ is $d+1$. When the summation of entropies of $N$ observables is maximized, this restriction has to be taken into account. The optimization of this restriction on purities was used in Ref.\cite{Puchala2015} in order to obtain lower and upper bounds of entropic uncertainty relation of N observables for pure states.
Optimal entropic certainty relation for $N$ MUBs can be obtained if, additional to the summation of probability to unity, the inequality (\ref{eq01}) is considered in the maximization of the entropy-summation of the observables. In Refs.\cite{Sanchez-Ruiz1993,Sanchez-Ruiz1995}, author found entropic certainty relation for $d+1$ MUBs, with the aid of the assumption that the purities of the observables corresponding to MUBs are constant independently. We first extend the equality in Eq.(\ref{eq01}) to $N$ MUBs for some density matrix, and then, take it as a condition on the probability distributions; thus, in turn, the purities of the observables are considered dependent on each other.  The intuitive reason behind our consideration can be seen from the following scenario. If one assumes the probability distribution of an observable as $\{p_{n1} = 1, p_{n2} = p_{n3}=\cdots = p_{nd} = 0\}$, then the probability distributions of the rest observables become equally likely as $\{p_{s1} = p_{s2} = \cdots = p_{sd} =1/d ; s = 1,2,\ldots,n-1,n+1,\ldots,N\}$, which implies that the purities of the observables corresponding to $N$ MUBs are dependent on each other. 
\begin{proposition}
Let $\{\ket{nk}, k=1,2,\ldots,d\}$ be the orthonormal basis of the observable $A_{n}$ in Hilbert space $H^{d}$. Then, for the density matrices $\rho= \sum _{n=1,k=1}^{N,d}\lambda_{nk}\dyad{nk}{nk} $, the summation of the purities of N observables is $\sum_{n=1}^{N}C_{n}:= \sum_{n=1,k=1}^{N,d}p_{nk}^{2}= \tr(\rho ^{2})+ \frac{N-1}{d}$.
\end{proposition}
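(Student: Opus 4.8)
The plan is to reduce everything to an explicit computation in terms of the coefficients $\lambda_{nk}$, using only the overlap identity $\tr(\Pi_{nk}\Pi_{ml})=\frac{1+(d\delta_{kl}-1)\delta_{nm}}{d}$ recalled above (with $\Pi_{nk}=\dyad{nk}{nk}$) and the normalization $\tr\rho=1$. First I would express the measurement probabilities $p_{nk}=\tr(\Pi_{nk}\rho)$ directly: inserting $\rho=\sum_{m,l}\lambda_{ml}\Pi_{ml}$ and the overlap identity gives
\begin{equation*}
p_{nk}=\frac{1}{d}\sum_{m,l}\lambda_{ml}+\lambda_{nk}-\frac{1}{d}\sum_{l}\lambda_{nl}=\lambda_{nk}+\frac{1-\Lambda_{n}}{d},
\end{equation*}
where $\Lambda_{n}:=\sum_{l}\lambda_{nl}$ and I have used $\sum_{m,l}\lambda_{ml}=\tr\rho=1$ (each $\Pi_{ml}$ being a rank-one projector). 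As a consistency check this already yields $\sum_{k}p_{nk}=\Lambda_{n}+d\cdot\frac{1-\Lambda_{n}}{d}=1$, so each distribution is normalized.

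Next I would square and sum. Expanding $p_{nk}^{2}$ and summing over $k$ and then $n$, the linear and constant contributions combine, and one obtains
\begin{equation*}
\sum_{n=1}^{N}C_{n}=\sum_{n,k}p_{nk}^{2}=\sum_{n,k}\lambda_{nk}^{2}+\frac{N}{d}-\frac{1}{d}\sum_{n}\Lambda_{n}^{2}.
\end{equation*}
Separately I would compute $\tr(\rho^{2})=\sum_{n,k,m,l}\lambda_{nk}\lambda_{ml}\tr(\Pi_{nk}\Pi_{ml})$ with the same identity: the ``$1/d$'' piece contributes $\frac{1}{d}\big(\sum_{n,k}\lambda_{nk}\big)^{2}=\frac{1}{d}$, while the $\delta$-piece contributes $\sum_{n,k}\lambda_{nk}^{2}-\frac{1}{d}\sum_{n}\Lambda_{n}^{2}$, so that
\begin{equation*}
\tr(\rho^{2})=\frac{1}{d}+\sum_{n,k}\lambda_{nk}^{2}-\frac{1}{d}\sum_{n}\Lambda_{n}^{2}.
\end{equation*}
Subtracting this display from the previous one eliminates the combination $\sum_{n,k}\lambda_{nk}^{2}-\frac{1}{d}\sum_{n}\Lambda_{n}^{2}$ and leaves exactly $\sum_{n}C_{n}-\tr(\rho^{2})=\frac{N}{d}-\frac{1}{d}=\frac{N-1}{d}$, which is the claim.

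The argument is essentially bookkeeping with double sums, so I do not expect a genuine obstacle; the only mild piece of foresight is to notice in advance that the same quadratic form in the $\lambda_{nk}$ appears in both $\sum_{n,k}p_{nk}^{2}$ and $\tr(\rho^{2})$, so that it cancels on subtraction rather than having to be evaluated. One point worth a remark is that the representation $\rho=\sum_{n,k}\lambda_{nk}\Pi_{nk}$ need not be unique, but this is harmless: every quantity in the statement ($C_{n}$ through $p_{nk}=\tr(\Pi_{nk}\rho)$, and $\tr(\rho^{2})$) depends only on $\rho$, and the intermediate $\lambda_{nk}$ drop out of the final identity, so the conclusion holds for any fixed choice of the coefficients. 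Note also that positivity of $\rho$ is never used — only $\tr\rho=1$ — so the identity in fact holds for any Hermitian $\rho$ of unit trace expressible in this form.
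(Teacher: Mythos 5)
Your proof is correct and follows essentially the same route as the paper: both express $p_{nk}$ in terms of the $\lambda_{nk}$ via the overlap identity, compute $\sum_{n,k}p_{nk}^{2}$ and $\tr(\rho^{2})$ as quadratic forms in the $\lambda_{nk}$, and let the common piece $\sum_{n,k}\lambda_{nk}^{2}-\frac{1}{d}\sum_{n}\Lambda_{n}^{2}$ cancel, leaving $\frac{N-1}{d}$. The paper merely packages the linear map $\boldsymbol{\lambda}\mapsto\mathbf{p}$ as an $Nd\times Nd$ block matrix $\mathbf{T}$ and evaluates $\boldsymbol{\lambda}^{T}\mathbf{T}^{2}\boldsymbol{\lambda}$, which is the same computation, and your closing remarks on the non-uniqueness of the $\lambda_{nk}$ and on positivity being unnecessary align with (and slightly sharpen) the paper's observation that $\mathbf{T}$ is not invertible.
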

\begin{proof}
When the dimension of the relevant system is a power of a prime number, the expression $\rho= \sum _{n=1,k=1}^{N,d}\lambda_{nk}\dyad{nk}{nk}$ is valid for  density matrices that can be expanded in terms of $N$ mutual unbiased bases such that $1\leq N \leq d+1$, because in this case, there are $d+1$ MUBs \cite{Wootters1989}. If the dimension is not a power of a prime number then the expression given above for the density matrices is still valid at least  when $1\leq N \leq 3$ since we know that there exist at least three MUBs in any finite dimensional Hilbert space \cite{Klappenecker2003}. 
\par Let us assume that $\rho= \sum _{n=1,k=1}^{N,d}\lambda_{nk}\ket{nk}\bra{nk}.$ 
Since $\tr(\rho)=1$ then $\sum_{n=1, k=1}^{N,d} \lambda_{nk}=1$. Furthermore, the trace of the square of density matrix leads to 
\begin{equation}
\begin{split}
 \tr(\rho^{2})&= \sum_{m,n,k,s} \lambda_{nk}\lambda_{ms}\frac{1+(d\delta_{ks}-1)\delta_{nm}}{d}\\
 &=\frac{1}{d}+\sum_{nk}\lambda_{nk}^{2}-\frac{1}{d}\sum_{n,k,s}\lambda_{nk}\lambda_{ns},     
\end{split}
\end{equation} 
and the probabilities are 
\begin{equation}
\begin{split}
 p_{nk}:&= \tr(\Pi_{nk}\rho)\\
 & = \lambda_{nk}+\frac{1}{d}\sum_{m,l}\lambda_{ml}-\frac{1}{d}\sum_{l}\lambda_{nl}.
\end{split}
\end{equation} 
If we consider the probabilities $\{p_{nk}\}$ and the coefficients $\{\lambda_{nk}\}$ as two column vectors  $\vb{p}=(p_{11},p_{12},\ldots,p_{(N)d})^{T}$ and $\vb*{\lambda}=(\lambda_{11},\lambda_{12},...,\lambda_{(N)d})^{T}$, then the relation between them can be written by means of an $Nd \times Nd$ symmetric matrix $\vb{T}$ as $\vb{p}=\vb{T}\vb*{\lambda}$. More explicitly,
\begin{equation}
   \begin{pmatrix}
    \vdots\\
    p_{nk}\\
    \vdots
    \end{pmatrix}
    = \begin{pmatrix}
    I_{d}& D_{1}&D_{2} & \ldots & D_{N-1}\\
    D_{1}& I_{d}&D_{2} & \ldots & D_{N-1}\\
     \vdots & & &\ddots  & \\
    
    D_{1} & D_{2}& \ldots &D_{N-1}& I_{d} 
    
    \end{pmatrix}
    \begin{pmatrix}
    \vdots\\

    \lambda_{nk}\\
    \vdots\\
    \end{pmatrix},
\end{equation}
where $I_{d}$ is $d \times d$ identity matrix and the matrices  $\{D_{i}\}_{i=1}^{N-1}$ are also $d \times d$ matrices such that their all entries are  $\frac{1}{d}$, that is 
\begin{equation}
    D_{1}=\cdots = D_{N-1}=D_{d}=\frac{1}{d}\begin{pmatrix}
    1 & 1& \ldots & 1\\
    1 & 1& \ldots & 1\\
     &  & \vdots &\\
    1 & 1& \ldots& 1
    \end{pmatrix}.
\end{equation}
It is easily seen that $D_{d}^2 = D_{d}$. The matrix $\vb{T}$ is not invertible which implies that a particular distribution $\vb{p}=(p_{11},p_{12},...,p_{(N)d})^{T}$ is not uniquely determined by the  density matrix. 
The summation of the purities of $N$ complementary observables is equal to the square of the norm of $\vb{p}$, $\sum_{n=1}^{N}C_{n} = \sum_{n=1,k=1}^{N,d}p_{nk}^{2}=\vb{p}^T \vb{p}$, where it reads
\begin{equation}
\begin{split}
 \vb{p}^T \vb{p} &= \vb*{\lambda}^{T} \vb{T}^{2} \vb*{\lambda}\\
 & =\frac{N}{d}+\sum_{n,k}\lambda_{nk}^{2}-\frac{1}{d}\sum_{n,k,s}\lambda_{nk}\lambda_{ns}\\
 &=\frac{N}{d} + \tr(\rho^{2})-\frac{1}{d}=\tr(\rho^{2})+\frac{N-1}{d}.
\end{split}
\end{equation}
\end{proof}

\par Consequently, this equality of purities proves the aforementioned intuitive reasoning of the fact that purities of the observables are dependent on each other. Therefore, the equality has to be taken into account when maximized the summation of the entropies. We obtain the optimal entropic certainty relation for $N$ MUBs  under  the following  conditions satisfied the probability distributions of the associated observables
\begin{align}
	\sum_{k=1}^{d}p_{nk}&=1\\
	 \sum_{n=1,k=1}^{N,d}p_{nk}^{2}&= \tr(\rho^{2})+ \frac{N-1}{d} \label{eq02}, 
\end{align} 
and under the assumption that \textit{the density matrix $\rho$ can be expressed in terms of $N$ MUBs under consideration. For $N=d+1$, the density matrix is the general one and, in turn, our following results become true for any density matrix}. Henceforth, we will abbreviate the trace of the square of density matrix as  $\Pi:=\tr\left(\rho^{2}\right)$. Our method is based on the variation of the function
\begin{equation*}
	S[\{A_{n}\}]:= \sum_{n=1}^{N}H(A_{n})=-\sum_{n=1,k=1}^{N,d}p_{nk}\ln p_{nk},
\end{equation*} 
where $H(A_{n})$ is Shannon entropy of the observable $A_{n}$. Maximization of the function $S[\{A_{n}\}]$ under the conditions given above is equivalent to the maximization of the following function
\begin{equation}
\begin{split}
&\Omega(\{p_{nk}\}):= -\sum_{n=1,k=1}^{N,d}p_{nk}\ln p_{nk}\\
&-\lambda(\sum_{n=1,k=1}^{N,d}p_{nk}^{2}-\Pi-\frac{N-1}{d})
-\beta(\sum_{k=1}^{d}p_{nk}-1),
\end{split}
\end{equation}
where $\lambda$ and $\beta$ are Lagrange multipliers. Variation of $\Omega$-function reads
\begin{equation*}
\begin{split}
\delta \Omega&=\\
 &\sum_{k=1}^{d}\left(-\sum_{n=1}^{N}\ln p_{nk}-2\lambda\sum_{n=1}^{N} p_{nk}-(\beta + N)\right)\delta p_{nk}=0,
 \end{split}
\end{equation*}
 so that the following equality must be satisfied for all $p_{nk}$'s, where none of them can be zero,
 \begin{equation}\label{eq03}
 	\sum_{n=1}^{N}\ln p_{nk}+2\lambda\sum_{n=1}^{N} p_{nk}+(\beta +N)=0, \quad k=1,2,\ldots,d.
 \end{equation}
 Without losing generality, we choose the probabilities set $\{p_{nd}=b_{n}, p_{nk}=t_{nk}b_{n}, k=1,2,\ldots,d-1; n=1,2,\ldots,N\} $. Substituting these probabilities into Eq.(\ref{eq03}), we obtain two equations
 \begin{align}
 &\sum_{n=1}^{N}\ln b_{n}+2\lambda\sum_{n=1}^{N} b_{n} = -(\beta +N)\quad for \quad k=d, \label{eq04}\\
 &\sum_{n=1}^{N}\ln t_{nk}+\sum_{n=1}^{N}\ln b_{n}+2\lambda\sum_{n=1}^{N} t_{nk}b_{n} = -(\beta +N)\quad for\quad k=1,2,\ldots,d-1\label{eq05}.
 \end{align} 
 Substituting $-(\beta+N)$ of Eq.(\ref{eq04}) into Eq.(\ref{eq05}), we obtain the following equality
 \begin{equation}
 \frac{\sum_{n=1}^{N}\ln t_{nk}}{\sum_{n=1}^{N}\left(t_{nk}-1\right)b_{n}}=-2\lambda ; \quad k=1,2,\ldots,d-1 \label{eq06}.
 \end{equation}
 The right hand side of Eq.(\ref{eq06}) is a constant number for every $k=1,2,\ldots,d-1$, so that the parameter  $t_{nk}$ must be independent of index-$k$, that is, $t_{n1}=t_{n2}= \cdots =t_{n\left(d-1\right)}=\frac{1-b_{n}}{(d-1)b_{n}}$. Consequently, we obtain the probability distributions as $\{p_{nd}=b_{n}, p_{nk} =\frac{1-b_{n}}{d-1}, k=1,2,\ldots,d-1; n=1,2,\ldots,N\}$. According to these distributions, the summation of the entropies is
 \begin{equation}\label{eq07}
 \begin{split}
  H_T(\{b_{n}\}):&= S[\{A_{n}\}] =-\sum_{n=1}^{N}b_{n}\ln b_{n}\\
  &-\sum_{n=1}^{N}\left(1-b_{n}\right)\ln\left(\frac{1-b_{n}}{d-1}\right)
 \end{split}
 \end{equation}
with the condition
\begin{equation}\label{eq08}
	\sum_{n=1}^{N}\left(d b_{n}^{2}-2b_{n}\right)=\frac{(d-1)\left[d(\Pi+1)-(d+1)\right]-N}{d},
\end{equation}
which is the revision of the condition in Eq.(\ref{eq02}), since we could not eliminate this condition at the end of the maximization of the function $S$. To find the extremum values of the function $H_{T}$, we define similarly another function as
\begin{equation}
\begin{split}
	&\Psi(\{b_{n}\}):=-\sum_{n=1}^{N}b_{n}\ln b_{n}-\sum_{n=1}^{N}\left(1-b_{n}\right)\ln\left(\frac{1-b_{n}}{d-1}\right)\\
	&-\mu\left(\sum_{n=1}^{N}\left(db_{n}^{2}-2b_{n}\right)-\frac{(d-1)\left[d(\Pi+1)-(d+1)\right]-N}{d}\right).
\end{split}
\end{equation}
The variation of $\Psi$ function reads
\begin{equation}
	\sum_{n=1}^{N}\left(\ln\left(\frac{1-b_{n}}{(d-1)b_{n}}\right)-2\mu(db_{n}-1)\right)\delta b_{n}=0 
\end{equation}
Since the infinitesimals $\{\delta b_{n}\}$ are arbitrary, the coefficients  must be zero
\begin{equation}\label{eq09}
\begin{split}
	&\ln\left(\frac{1-b_{n}}{(d-1)b_{n}}\right)-2\mu(db_{n}-1)=0 \\
	&\Rightarrow \frac{\ln\left(\frac{1-b_{n}}{(d-1)b_{n}}\right)}{db_{n}-1}=2\mu; n=1,2,\ldots,N.
\end{split}
\end{equation}
The left hand side of Eq.(\ref{eq09}) is constant, so that the parameters $b_{n}$ must be independent of index-$n$, that is, $b_{1}= b_{2} = \cdots = b_{N}$. Bearing in mind this fact, we obtain $b_{n}$ from Eq.(\ref{eq08}) as
\begin{equation}\label{eq10}
\begin{split}
	&b_{n}^{\pm}=\frac{\sqrt{N} \pm \sqrt{(d-1)\left[d(\Pi+1)-(d+1)\right]}}{d\sqrt{N}}; \\
	& \ceil*{\frac{d+1}{\Pi+1}} \leq d,
\end{split}
\end{equation}  
where $\ceil*{.}$ is ceiling function. The condition on the dimension  $d$ in Eq.(\ref{eq10}) comes from the fact that the term $\sqrt{(d-1)\left[d(\Pi+1)-(d+1)\right]}$ must be non-negative real number. The value $b_{n}^{+}$ gives the optimal upper bound of the total entropy $H_{T}$. Making the abbreviation $\alpha := \sqrt{(d-1)\left[d(\Pi+1)-(d+1)\right]}$, we obtain the optimal entropic certainty relation for $N$ MUBs as
\begin{equation}\label{eq11}
\begin{split}
	&H_{T} \leq H_{T}^{+}=N\ln\left(\frac{d(d-1)\sqrt{N}}{(d-1)\sqrt{N}-\alpha}\right)\\
	&-\frac{N+\sqrt{N}\alpha}{d}\ln\left(\frac{(d-1)(\sqrt{N}+\alpha)}{(d-1)\sqrt{N}-\alpha}\right).
\end{split}
\end{equation}
In order $b_{n}^{-}$ to be a positive real number, it requires that
\begin{equation}\label{eq12}
\begin{split}
	&b_{n}^{-}=\frac{\sqrt{N} - \sqrt{(d-1)\left[d(\Pi+1)-(d+1)\right]}}{d\sqrt{N}} >0 \\
	&\Rightarrow d< \frac{d+1}{\Pi +1}+\frac{N}{(d-1)(\Pi +1)}\leq  \frac{d+1}{\Pi +1}\left(1+\frac{1}{d-1}\right)\\
	&\Rightarrow d< \frac{d+1}{\Pi +1}+1 \rightarrow d \leq \ceil*{\frac{d+1}{\Pi+1}}.
	\end{split}
\end{equation} 
Both restrictions in Eq.(\ref{eq10}) and Eq.(\ref{eq12}) on the dimension  $d$ give a unique value, $d=\ceil*{\frac{d+1}{\Pi+1}}$. This is possible only if $ \Pi = 1/d$, which corresponds to pure mixed state $\rho = \frac{1}{d}I$. This means that $b_{n}^{-}$ cannot be a stationary value for the function $S[\{A_{n}\}]$ but an extremum \cite{Lanczos}, which is the special value $1/d$ of $b_{n}^{+}$. 

\section{Results and Discussion}

\par Our result in Eq.(\ref{eq11}) is different from the one given in Ref.\cite{Sanchez-Ruiz1995} since we have considered Eq.(\ref{eq02}) when maximized the function $S[\{A_{n}\}]$ which is satisfied by the purities of the observables $\{A_{n}\}_{n=1}^{N}$, and makes them dependent on each other. In addition, contrast to the certainty relations given in Refs\cite{Sanchez-Ruiz1995,Wu2009, Puchala2015}, our result is optimal when density matrix of the system of inquiry can be written in terms of bases of $N$ observables. Entropic certainty relation of Ref.\cite{Puchala2015} is valid only for pure states and, is not optimal. As a difference from the result in Ref.\cite{Puchala2015}, our result in Eq.(\ref{eq11}) is state-independent for pure states. When $N=d+1$, our result is novel since it is  optimal upper bound for general density matrices. 
\par we have confirmed the novelty of the result by some numerical estimations. For a pure state $\rho$ in dimension $d=2$, one can estimate the maximum value of the total entropy of the spin observables (operators) $\{\sigma_{X},\sigma_{Y},\sigma_{Z}\}$ numerically as $1.547120$, which coincides with the value of the optimal upper bound $H_{T}^{+}$ given in Eq.(\ref{eq11}); for $d=3$, $N= d+1$ and a pure state $\rho$, the (maximum) value is numerically $ \approx 3.449119$, which again almost coincides with the value $3.47025$ of $H_{T}^{+}$ (for details, see Appendix A).
\par The physical significance of entropic certainty relation rises in searching mutually coherent states, which are related with the existence of MUBs. By definition, $\ket{\psi_{coh}}$ is a mutually coherent state with respect to $N$ MUBs associated with the set of observables $\{A_{n}\}_{n=1}^{N}$, iff $\{\tr(\Pi_{nk}\ket{\psi_{coh}}\bra{\psi_{coh}})=\frac{1}{d}, \forall n,k ; n=1,2,\ldots,N; k=1,2,\ldots,d \}$. As emphasized in introduction, MUBs have important applications in the fields such as quantum cryptography and quantum state tomography. Even if the existence of $3$ MUBs is known \cite{Klappenecker2003}, whether there are more than three MUBs in non-prime power dimension is still an open question.  If $\{\ket{\psi_{k}}\}_{k=1}^{d}$ are  mutually coherent states with respect to $N$ MUBs, the set of N MUBs can be extended to $N+1$ MUBs \cite{Mandayam2014}. Stating in a reverse manner, \textit{ $(i)$ if there is no a mutually coherent state $\ket{\psi_{coh}}$ with respect to $N$ MUBs, this set of  $N$ MUBs cannot be extended to $N+1$ MUBs}. It is straightforward to see that in case of the state of the system being a mutually coherent state (with respect to N MUBs in question), total entropy of N MUBs must achieve its maximum value, that is, $N\ln(d)$. We now wish to show how our result in Eq.(\ref{eq11}) covers this fact. We assume that the density matrix of the system of inquiry could be written in terms of N MUBs and the mutually coherent state $\ket{\psi_{coh}}$, that is, 
\begin{equation}\label{eq13}
    \rho= \sum_{n=1,k=1}^{N,d}\lambda_{nk}\ket{nk}\bra{nk}+r\ket{\psi_{coh}}\bra{\psi_{coh}}.
\end{equation}
The only change in our maximization procedure for total entropy happens to the parameter $\alpha$ such that $\alpha \mapsto \bar{\alpha}= \sqrt{(d-1)\left[d(\Pi+1)-(d+1)-r^{2}(d-1)\right]}$. Therefore, we need to make the revision $H_{T}^{+}(N,d,\alpha) \mapsto H_{T}^{+}(N, d,\bar{\alpha})$ in Eq.(\ref{eq11}). Now, if $\rho$ is a mutually coherent state with respect to $N$ MUBs, it must be $\forall \lambda_{nk}=0, r=1$, which makes the parameter $\bar{\alpha}=0$, and thereby, $H_{T}^{+}(N,d,0)$ reduces to $N\ln(d)$ that was to be shown. In addition to the numerical justification, this is another justification of the fact that our result in Eq.(\ref{eq11}) is indeed optimal. Since $r\rightarrow 0$ then $\bar{\alpha}\rightarrow \alpha$, and since result in Eq.(\ref{eq11}) is optimal upper bound, we can, in consequence, assert that, \textit{ $(ii)$ if the optimal upper bound in Eq.(\ref{eq11}) cannot be exceeded, there is no a mutually coherent state with respect to $N$ MUBs}. As a result, from the two premises $(i)$ and $(ii)$ above, we make the following inference: \textit{(iii) if the optimal upper bound for $N$ MUBs in Eq. (\ref{eq11}) cannot be exceeded, this set of $N$ MUBs cannot be extended to $N+1$ MUBs}. This inference sets forth a quantitative criterion for the existence of  mutually coherent states, and thus, for the extendibilty of MUBs. For instance, since the existence of $4$ MUBs in 6-dimensional Hilbert space is still conundrum, this criterion can be used as a numerical ground in order to show the non-existence of $4th$ MUB. If the upper bound in Eq.(\ref{eq11}) cannot be exceeded for $3$ MUBs in six dimensional Hilbert space, then there is no fourth MUB.
\par We now wish to give an application of entropic certainty relation in Eq.(\ref{eq11}) to the mutual information. If the total entropy of the observables set $\{A_{n}\}_{n=1}^{d+1}$ has a lower bound such as $\sum_{n}H(A_{n})\geq q$, then that summation of the observables, where each of them is conditioned with a classical memory $Y$, satisfies the inequality $\sum_{n}H(A_{n}\mid Y)\geq q$; more formally, $\sum_{n}H(A_{n})\geq q \Rightarrow \sum_{n}H(A_{n}\mid Y)\geq q$  (see, Ref.\cite[p.22]{Coles2017}). From the definition of the mutual information, we can now write
\begin{equation}
\begin{split}
    &I(A_{n}:Y):= H(A_{n})-H(A_{n}\mid Y) \\
    &\Rightarrow \sum_{n}\left( H(A_{n})-I(A_{n}:Y)\right)=\sum_{n}H(A_{n}\mid Y)\\
    & \Rightarrow \sum_{n}\left(H(A_{n})-I(A_{n}:Y)\right)\geq q.
\end{split}
\end{equation}
For the complementary observables $\{A_{n}\}_{n=1}^{d+1}$ in $d$-dimensional Hilbert space, $q= (d+1)\ln \left(\frac{d+1}{\Pi+1}\right)$ \cite{Rastegin2013} and using the inequality in Eq.(\ref{eq11}) for $d+1$ MUBs, we obtain an upper bound on the summation of the mutual information as
\begin{equation}\label{eq14}
    \begin{split}
        \sum_{n=1}^{d+1}I(A_{n}:Y) & \leq (d+1)\ln\left(\frac{d(d-1)(\Pi+1)\sqrt{d+1}}{(d+1)\left[(d+1)\sqrt{d+1}-\alpha\right]}\right)\\
        & - \frac{d+1+\sqrt{d+1}\alpha}{d}\ln\left(\frac{(d-1)\sqrt{d+1}+\alpha}{(d-1)\sqrt{d+1}-\alpha}\right).
    \end{split}
\end{equation}
\section{Conclusion}
We have obtained the optimal upper bound of the entropic uncertainty relation  for  $N$ MUBs if the density matrix of the relevant system can be expressed in terms of $N$ MUBs. This bound implies that the entropies of the observables cannot achieve to their maximum values ($\ln(d)$) simultaneously. The crucial point in our derivation is the condition satisfied by the purities of the observables given in Eq.(\ref{eq01}). As pointed out, the purities of the observables corresponding to $N$ MUBs are dependent on each other; therefore, we have considered the equality in  Eq.(\ref{eq02}) in the maximization of the total entropy. If an equality relation for the summation of the purities of $N$ MUBs exists for a general density matrix, our result can be extended directly. An equality of this sort will be related with the dimension of the system ($d$), the density matrix ($\rho$) and the number of MUBs ($N$). As another choice, if a way of how to take the inequality in Eq.(\ref{eq01}) into account can be found, an optimal entropic uncertainty relation for $N$ observables can be again obtained for a general density matrix. Eq.(\ref{eq01}) is a non-holonomic condition on the summation of the entropies. It seems that the maximization  under this non-holonomic condition cannot be solved by the method given in Ref.\cite{Dreyfus1963}, which is about the variational calculation of a (at least piece wise) continuous function with inequality constraints.
\par We have shown that our result in Eq.(\ref{eq11}) provides a criterion for the existence of mutually coherent states, which are related with the existence of MUBs. Two questions can be argued in connection with the criterion: Can we assert that \textit{if there is no a mutually coherent states, the optimal upper bound in Eq.(\ref{eq11}) cannot be exceeded}? The second question is that: Can a new MUB be constructed, starting from a mutually coherent state (with respect to the old $N$ MUBs) that we find? To answer the first question, one needs a detailed logical analysis of the premises $(i)$ and $(ii)$ given above. As for the second question, we would like to just orient the attentions to two related works \cite{Mandayam2014} and \cite{Szanto2016} for now. 
\par We have also applied entropic certainty relation to the summation of the mutual information of $d+1$ complementary observables conditioned with a classical memory; one can make use of Eq.(\ref{eq14}) to detect whether the observables are correlated. In a scenario of detecting this correlation between spin observables $\{\sigma_{X},\sigma_{Y},\sigma_{Z}\}$, the optimal lower bound for entropic uncertainty relation is $q=\ln{4}-\{\frac{1+\abs{\vb{r}}}{2}\ln\frac{1+\abs{\vb{r}}}{2}+\frac{1-\abs{\vb{r}}}{2}\ln\frac{1-\abs{\vb{r}}}{2}\}$ \cite{Sanchez-Ruiz1993}, where $\vb{r}$ is Bloch vector in the density matrix representation, $\rho=\frac{1}{2}\left(I+\vb{r}.\vb*{\sigma}\right)$ with $\vb*{\sigma}=\left(\sigma_{X},\sigma_{Y},\sigma_{Z}\right)^{T} $.  The inequality given in Eq.(\ref{eq13}) can be revised depending on the lower bound $q$ of the summation of the entropies.

\appendix
\section{Appendix A: Probability distributions of d+1 MUBs in dimensions d=2 and d=3 for a pure state}
\subsection{Probability distributions in dimensions d=2}
In dimension d=2, a pure density matrix in computational basis $\{\ket{0},\ket{1}\}$ is
\begin{equation*}
\textbf{$\rho$}=
\begin{bmatrix}
\mid\alpha\mid ^{2} \quad \alpha\beta^{*}\\
\alpha^{*} \beta \quad \mid\beta\mid ^{2}
\end{bmatrix}.
\end{equation*}
In addition, taking the eigenstates of spin operators as columns for constructing the unitary matrices
\begin{equation*}
	U_{z}=\begin{bmatrix}
	1 & 0\\
	0 & 1
	\end{bmatrix}, U_{x}=\frac{1}{\sqrt{2}} \begin{bmatrix}
	1 & 1\\
	1 & -1
	\end{bmatrix}, U_{y}=\frac{1}{\sqrt{2}} \begin{bmatrix}
	1 & 1\\
	i & -i
	\end{bmatrix},
\end{equation*}
we can calculate the probabilities as $ p_{nk}= \bra{1k}U_{n}^{\dagger}\rho U_{n}\ket{1k}$, where $\{\ket{11}=\ket{0}, \ket{12}=\ket{1}\}$ is computational basis. Without losing generality, if we choose $\alpha =\sqrt{r}$ and $\beta=\sqrt{1-r}\exp(i\phi)$, then we obtain the probability distributions of spin observables $S_{z}, S_{x}, S_{y}$ as in the Table \ref{table:1}.
\begin{table}[H]
	\centering
	\begin{tabular}{ |l|l|l| }
		\hline
		\multicolumn{3}{|c|}{Table of MUBs and their probabilities, d=2} \\
		\hline
		$S_{z}$ & $p_{11}= r$ & $p_{12}=1-r$ \\
		\hline
		$S_{x} $ & $p_{21}=\frac{1}{2}(1+2\sqrt{r(1-r)}cos(\phi))$ & $p_{22}=\frac{1}{2}(1-2\sqrt{r(1-r)}cos(\phi))$ \\
		
		\hline 
		$S_{y}$ & $p_{31}=\frac{1}{2}(1-2\sqrt{r(1-r)}sin(\phi))$ & $p_{32}=\frac{1}{2}(1+2\sqrt{r(1-r)}sin(\phi))$\\  
		\hline
	\end{tabular}
	\caption{The probability distributions table of MUBs in d=2 when the density matrix is a pure state. The first column on left stands for MUBs ($S_{n}, n= z,x,y.$), and the others for probabilities of obtaining their first and second eigenvalues respectively.} 
	\label{table:1}
\end{table}
Writing total Shannon entropy of the observables ($S_{n}, n= z,x,y.$)
\begin{equation*}
    S_{T}(r,\phi):= -\sum_{n=1,k=1}^{3,2} p_{nk} \ln(p_{nk}),
\end{equation*}
We can estimate numerically the maximum value of $S_{T}$ by adjusting the parameters $r$ and $\phi$. The maximum values is $1.547120$, achieving when $r= \frac{1}{2}$ and $\phi= \frac{\pi}{4}$. 
\subsection{Probability distributions in dimension d=3}
Like in dimension d=2, the general pure density matrix in dimension d=3 can be written as the follows
\begin{equation*}
\textbf{$\rho$}=
\begin{bmatrix}
\mid\alpha\mid ^{2} & \alpha\beta^{*} & \alpha\gamma ^{*}\\
\alpha^{*} \beta & \mid\beta\mid ^{2} & \beta \gamma ^{*}\\
\alpha^{*} \gamma & \beta^{*}\gamma & \mid \gamma \mid ^{2}\\
\end{bmatrix},
\end{equation*}
and the unitary matrices are
\begin{equation*}
	U_{1}=\begin{bmatrix}
	1 & 0 & 0\\
	0 & 1 & 0\\
	0 & 0 & 1
	\end{bmatrix}, U_{2}=\frac{1}{\sqrt{3}}\begin{bmatrix}
	1 & 1 & 1\\
	1 & \omega & \omega ^{2}\\
	1 & \omega ^{2} & \omega
	\end{bmatrix}, U_{3}=\frac{1}{\sqrt{3}}\begin{bmatrix}
	1 & 1 & 1\\
	\omega & \omega^{2} & 1\\
	\omega & 1 & \omega^{2}
	\end{bmatrix}, U_{4}=\frac{1}{\sqrt{3}}\begin{bmatrix}
	1 & 1 & 1\\
	\omega^{2} & \omega & 1\\
	\omega^{2} & 1 & \omega
	\end{bmatrix}, 
\end{equation*}
 where $\omega=\exp(\frac{2\pi i}{3})$. Then, the probability of obtaining the eigenvalue $\lambda_{k}$ of the observable $A_{n}$ is $p_{nk}=\bra{1k}[U_{n}^{\dagger}\rho U_{n}\ket{1k}$. Without losing generality, we choose $\alpha =\sqrt{r}, \beta=\sqrt{q}\exp(i \phi_{1})$ and $\gamma =\sqrt{1-(r+q)}\exp(i \phi_{2})$, leading to the probability distributions in Table \ref{table:2}:
 \begin{table}[H]
 	\centering
 	\begin{tabular}{ |l|l|l|l| }
 		\hline
 		\multicolumn{4}{|c|}{Table of MUBs and their probabilities, d=3 } \\
 		\hline
 		$A_{1}$ & $p_{11}= r$ & $p_{12}=q$ &$p_{13}= 1-(r+q)$ \\
 		\hline
 		$A_{2} $ & $p_{21}=\frac{1}{3}(1+2f_{21})$ & $p_{22}=\frac{1}{3}(1+2f_{22})$ & $p_{23}=\frac{1}{3}(1+2f_{23})$\\
 		
 		\hline 
 		$A_{3}$ & $p_{31}=\frac{1}{3}(1+2f_{31})$ & $p_{32}=\frac{1}{3}(1+2f_{32})$ & $p_{33}=\frac{1}{3}(1+2f_{33})$\\  
 		\hline
 		$A_{4}$ & $p_{41}= \frac{1}{3}(1+2f_{41})$ & $p_{42}=\frac{1}{3}(1+2f_{42})$ &$ p_{43}=\frac{1}{3}(1+2f_{43})$ \\
 		\hline
 	\end{tabular}
 	\caption{The probability distributions table of MUBs in d=3 when the density matrix is pure state. The first column on left stands for MUBs ($A_{n}, n=1,2,3,4.$), and the others for probabilities of obtaining their first, second and third eigenvalues respectively.}
 	\label{table:2}
 \end{table}
The functions $f_{nk}$'s are as follows
\begin{align*}
	f_{21}&=\sqrt{rq} cos(\phi_{1})+\sqrt{r(1-(r+q))}cos(\phi_{2})+\sqrt{q(1-(r+q))}cos(\phi_{1}-\phi_{2})\\
	f_{22}&=\sqrt{rq} cos(\phi_{1}-2\pi/3)+\sqrt{r(1-(r+q))}cos(\phi_{2}-4\pi/3)+\sqrt{q(1-(r+q))}cos(\phi_{1}-\phi_{2}+2 \pi/3)\\
	f_{23}&=\sqrt{rq} cos(\phi_{1}-4\pi/3)+\sqrt{r(1-(r+q))}cos(\phi_{2}-2\pi/3)+\sqrt{q(1-(r+q))}cos(\phi_{1}-\phi_{2}+4\pi/3)\\
\end{align*}
\begin{align*}
f_{31}&=\sqrt{rq} cos(\phi_{1}-2\pi/3)+\sqrt{r(1-(r+q))}cos(\phi_{2}-2\pi/3)+\sqrt{q(1-(r+q))}cos(\phi_{1}-\phi_{2})\\
f_{32}&=\sqrt{rq} cos(\phi_{1}-4\pi/3)+\sqrt{r(1-(r+q))}cos(\phi_{2})+\sqrt{q(1-(r+q))}cos(\phi_{1}-\phi_{2}+2\pi/3)\\
f_{33}&=\sqrt{rq} cos(\phi_{1})+\sqrt{r(1-(r+q))}cos(\phi_{2}-4\pi/3)+\sqrt{q(1-(r+q))}cos(\phi_{1}-\phi_{2}+4\pi/3)\\	
\end{align*}	
\begin{align*}	
	f_{41}&=\sqrt{rq} cos(\phi_{1}-4\pi/3)+\sqrt{r(1-(r+q))}cos(\phi_{2}-4\pi/3)+\sqrt{q(1-(r+q))}cos(\phi_{1}-\phi_{2})\\
	f_{42}&=\sqrt{rq} cos(\phi_{1}-2\pi/3)+\sqrt{r(1-(r+q))}cos(\phi_{2})+\sqrt{q(1-(r+q))}cos(\phi_{1}-\phi_{2}+4\pi/3)\\
	f_{43}&=\sqrt{rq} cos(\phi_{1})+\sqrt{r(1-(r+q))}cos(\phi_{2}-2\pi/3)+\sqrt{q(1-(r+q))}cos(\phi_{1}-\phi_{2}+2\pi/3)\\
\end{align*}
Like in d=2, the maximum value of total Shannon entropy  $S_{T}(r,q,\phi_{1}, \phi_{2})$ can be estimated, searching over its parameters $r,q,\phi_{1}$ and $\phi_{2}$. We obtained numerically the (maximum) value as $\approx 3.44911877719$, achieving when $r=0.21, q=0.395, \phi_{1}=\phi_{2}=5.236 \approx \frac{5\pi}{3}$. Since the computer we used is not powerful enough, we made the search over two variable while taking the others constant. However, a precise search must be performed by varying the whole parameters simultaneously. The theoretical value ( the value of $H_{T}^{+}$) can be numerically achieved if a more powerful computer is used.    

\end{document}